\newcommand{\f}{\frac}
\newcommand{\Z}{\mathbb{Z}}
\newcommand{\R}{\mathbb{R}}
\definecolor{purple}{RGB}{128,0,128}
\def\qed{\vbox{\hrule\hbox{\vrule\kern3pt\vbox{\kern6pt}\kern3pt\vrule}\hrule}}
\newtheorem{thm}{Theorem}[section]
\newtheorem{lm}[thm]{Lemma}
\newtheorem{rem}[thm]{Remark}
\newtheorem{prop}[thm]{Proposition}
\newtheorem*{thm*}{Theorem}
\newtheorem*{cor*}{Corollary}
\newtheorem*{lm*}{Lemma}
\newtheorem*{prop*}{Proposition}
\theoremstyle{definition}
\newtheoremstyle%
 {Aside}%
 {}{}%
 {\color{purple}\itshape}
 {}%
 {\color{purple}\bfseries}%
 {\color{purple}.}%
 { }{}
\theoremstyle{Aside}
\title{Subtour Elimination Constraints Imply a Matrix-Tree Theorem SDP Constraint for the TSP}
\author{Samuel C. Gutekunst and David P. Williamson}
\date{}
\begin{document}
\maketitle

\begin{abstract}
De Klerk, Pasechnik, and Sotirov \cite{Klerk08} give a semidefinite programming constraint for the Traveling Salesman Problem (TSP) based on the matrix-tree Theorem.  This constraint says that the aggregate weight of all spanning trees in a solution to a TSP relaxation is at least that of a cycle graph.  In this note, we show that the semidefinite constraint holds for any weighted 2-edge-connected graph and, in particular, is implied by the subtour elimination constraints of the subtour elimination linear program. Hence, this semidefinite constraint is implied by a finite set of linear inequality constraints.
\end{abstract}

\section{Introduction and The Matrix-Tree Theorem} 

The Traveling Salesman Problem (TSP) is a fundamental problem in combinatorial optimization and a canonical NP-hard problem.  Efficiently computable relaxations of the TSP are used to find optimal and near-optimal TSP solutions, and recently, several relaxations based on semidefinite programs (SDPs) have been proposed (see, e.g., Cvetkovi{\'c}, {\v{C}}angalovi{\'c}, and Kova{\v{c}}evi{\'c}-Vuj{\v{c}}i{\'c} \cite{Cvet99}, de Klerk, Pasechnik, and Sotirov \cite{Klerk08}, and de Klerk and Sotirov \cite{Klerk12b}).  

A common source of SDP constraints for the TSP is spectral graph theory: the SDP of Cvetkovi{\'c}  et al.\  \cite{Cvet99} is based on algebraic connectivity, and de Klerk et al.\ \cite{Klerk08} give a constraint based on  Kirchoff's matrix-tree theorem.  Goemans and Rendl \cite{Goe00}  show that the constraints used in the SDP relaxation of Cvetkovi{\'c} et al.\ \cite{Cvet99} are implied by the canonical TSP relaxation, the subtour elimination linear program (see Equation (\ref{eq:SLP}) below for the precise definition of this linear program). In this note we show that the matrix-tree theorem constraint of de Klerk et al.\ \cite{Klerk08} is also implied by the subtour elimination linear program constraints.

The matrix-tree theorem dates back to the mid-19th century (Kirchoff \cite{Kir47})  and connects the number of spanning trees of a graph to the Laplacian matrix of that graph. Let $G=(V, E)$ be a simple, undirected graph, and suppose each edge $e$ has weight $x_e\geq 0.$  Let $X$ be the corresponding weighted adjacency matrix, so that $X$ has zero diagonal and $X_{ij}=X_{ji}=x_{\{i, j\}}.$  The {\bf Laplacian} of $X$ is the $|V|\times |V|$ matrix $L(X)$ defined entrywise as $$L(X)_{i, j} = \begin{cases} -x_e,& \{i, j\}\in E \\ \sum_{e: e\cap i \neq \emptyset} x_e ,& i=j \\ 0,& \text{ else}. \end{cases}$$ Suppose that $\mathcal{T}_G$ is the set of spanning trees of $G$.  The matrix-tree theorem is the remarkable result that  any \emph{principal minor} of $L(X)$ (i.e., the determinant of the matrix obtained by removing the $i$th row and column of $L(X)$ for any $1\leq i\leq |V|$) equals $\sum_{T\in \mathcal{T}_G} \prod_{e\in T} x_e.$  In the case that $x_e=1$ for every edge in $G$, the term $\sum_{T\in \mathcal{T}_G} \prod_{e\in T} x_e$ counts the number of spanning trees of $G$.  See Theorem VI.29 in \cite{Tut01}, e.g., for a proof of this general version of the matrix-tree theorem.

De Klerk  et al.\ \cite{Klerk08} notice that any Hamiltonian cycle on $n$ vertices has $n$ spanning trees (delete any individual edge).  They use the matrix-tree theorem to derive a constraint for SDP relaxations of the TSP saying that ``the aggregate weight of spanning trees is at least $n$.''  We show that this constraint is implied by constraints in the subtour elimination linear program:
\begin{thm}\label{thm:Main1}
Let $x\in \R^E$ be a feasible solution to the subtour LP (\ref{eq:SLP}) and let $G$ be the complete graph.  Let $X$ be the symmetric matrix where $X_{ij}=X_{ji}=x_{\{i, j\}}$ and $X_{ii}=0$ for all $i$.  Then $X$ satisfies the matrix-tree theorem constraint:
$$\sum_{T\in \mathcal{T}_{G}} \prod_{e\in T} x_e \geq n.$$
\end{thm}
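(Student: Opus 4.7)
The plan is to recast $P(x) := \sum_{T \in \mathcal{T}_G} \prod_{e \in T} x_e \geq n$ as a determinantal inequality via the matrix-tree theorem, and then to exploit convexity on the subtour LP polytope. Starting from $P(x) = \det L(X)_{[1,1]}$ (matrix-tree), a short eigenvalue computation yields the equivalent identity
\[
\det\!\left(L(X) + \tfrac{1}{n}J\right) \;=\; n \cdot P(x),
\]
where $J$ is the all-ones matrix: adding $\tfrac{1}{n}J$ promotes the zero eigenvalue on the span of $\bi$ to $1$ while leaving the nonzero Laplacian eigenvalues $\lambda_2, \dots, \lambda_n$ unchanged, and their product equals $n \cdot P(x)$. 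Thus the theorem is equivalent to $\det\!\left(L(X) + \tfrac{1}{n}J\right) \geq n^{2}$.

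I would next observe that the subtour constraints force the weighted graph underlying $x$ to be connected, so that $L(X) + \tfrac{1}{n}J$ is positive definite throughout the subtour polytope. Since this matrix is affine in $x$ and $\log\det$ is concave on the cone of positive-definite matrices, the map $x \mapsto \log\det\!\left(L(X) + \tfrac{1}{n}J\right)$ is concave on the subtour polytope and hence attains its minimum at an extreme point. At every Hamilton-cycle extreme point $x = \bi_C$ the bound is immediate and tight, since $C$ has exactly $n$ spanning trees of weight $1$ each, giving $P(\bi_C) = n$. The remaining task is to verify $P(x) \geq n$ at each \emph{fractional} extreme point of the subtour LP.

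The main obstacle is precisely this fractional case: in general the subtour polytope strictly contains the TSP polytope, and its fractional extreme points are not governed by Hamilton-cycle structure. One natural approach combines the Lorentzian property of the spanning-tree polynomial --- $P^{1/(n-1)}$ is concave and positively homogeneous of degree $1$ on $\R^{E}_{\geq 0}$, hence superadditive --- with the fact that $\tfrac{n-1}{n} x$ lies in the spanning-tree polytope, which follows from the identity $x(E(S)) = |S| - \tfrac{1}{2} x(\delta(S)) \leq |S| - 1$. Unfortunately this by itself yields only the weak bound $P(x) \geq (n/(n-1))^{n-1} \to e$, far short of $n$. A more promising route, which I suspect is closer to the actual argument, is to establish a direct matrix inequality $L(X) + \tfrac{1}{n}J \succeq M$ for a carefully chosen PSD matrix $M$ with $\det M = n^{2}$ (for instance, $M = L(C) + \tfrac{1}{n}J$ for some reference cycle $C$), engineered by reading each cut inequality $\bi_S^{\top} L(X) \bi_S = x(\delta(S)) \geq 2$ as an explicit quadratic lower bound on $L(X)$ tested against the cut vector $\bi_S$, and then aggregating these lower bounds over all subsets $S$.
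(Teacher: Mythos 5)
Your reduction is sound as far as it goes: the identity $\det\left(L(X)+\tfrac{1}{n}J\right)=n\sum_{T\in\mathcal{T}_G}\prod_{e\in T}x_e$ is correct, and the use of log-concavity of the determinant on positive definite matrices to push the inequality from extreme points to the whole subtour polytope is exactly the closing step of the paper's own proof. But you have correctly diagnosed, and then not filled, the actual hole: the inequality at \emph{fractional} extreme points. Your first candidate (Lorentzian concavity of $P^{1/(n-1)}$ plus membership of $\tfrac{n-1}{n}x$ in the spanning-tree polytope) provably tops out at $\left(\tfrac{n}{n-1}\right)^{n-1}<e$, as you note. Your second candidate --- finding a PSD matrix $M$ with $\det M=n^2$ and $L(X)+\tfrac{1}{n}J\succeq M$ by ``aggregating'' the cut inequalities $\bi_S^{\top}L(X)\bi_S=x(\delta(S))\geq 2$ --- is not carried out, and it is doubtful as stated: lower bounds on a quadratic form tested only against $0$--$1$ cut vectors do not yield a Loewner ordering, and no aggregation scheme is specified. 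So the proposal is a correct framing with the central lemma missing.

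The paper's resolution of the fractional case is combinatorial rather than matrix-analytic, and is the one idea your proposal lacks. Extreme points of the subtour LP are rational, so one can scale by $R=\mathcal{LCM}$ of the denominators to get $Rx\in\Z^E$ and reinterpret the weighted graph as an unweighted loopless multigraph with $Rx_e$ parallel copies of edge $e$ (a short exchange argument shows this multigraph has the same aggregate spanning-tree weight as the weighted graph). The subtour cut constraints $x(\delta(S))\geq 2$ then say this multigraph is $2R$-edge-connected, and a theorem of Ok and Thomassen guarantees that a $k$-edge-connected multigraph on $n$ vertices has at least $n\left(\tfrac{k}{2}\right)^{n-1}$ spanning trees --- here at least $nR^{n-1}$. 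Dividing out the homogeneity factor $R^{n-1}$ gives $\sum_{T}\prod_{e\in T}x_e\geq n$ at every (rational) extreme point, which is precisely the input your convexity argument needs. If you want to complete your write-up along your own lines, you should either import this edge-connectivity-to-tree-count theorem or find a substitute for it; the convex-analytic machinery alone will not close the gap.
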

Our results show that the matrix-tree theorem constraint (requiring linear matrix inequalities) of  de Klerk et al.\ \cite{Klerk08}  is  weaker than the subtour LP (using just linear inequalities). They can also be stated more generally: any graph $G$ that is 2-edge-connected in a weighted sense (i.e. $\sum_{e:|e\cap S|=1}x_e \geq 2$ for every $\emptyset \subsetneq S \subsetneq V$) satisfies $\sum_{T\in \mathcal{T}_G} \prod_{e\in T} x_e \geq n.$  Our result follows from a theorem of  Ok and Thomassen \cite{Ok17} that lower-bounds the number of spanning trees in an unweighted, loopless, undirected multigraph.  In Section \ref{sec:bg}, we provide background on the TSP and relaxations.  In Section \ref{sec:proof} we then state the theorem from Ok and Thomassen \cite{Ok17} and use it to deduce Theorem \ref{thm:Main1}.

\section{Preliminaries}\label{sec:bg}

The   Traveling Salesman Problem can be stated as follows.  Let $G=K_n$ be the complete graph on $V=[n]:=\{1, 2, ..., n\}.$  For each $e=\{i, j\}\in G$, associate an edge cost $c_e$  (interpreted  as the cost of traveling from vertex $i$ to vertex $j$ or vice versa).  The TSP is to find a minimum-cost tour on $G$ visiting every vertex exactly once, i.e., to find a minimum-cost Hamiltonian cycle on $K_n$ with respect to the edge costs.

The prototypical TSP relaxation is the the subtour elimination linear program (also referred to as the Dantzig-Fulkerson-Johnson relaxation \cite{Dan54} and the Held-Karp bound \cite{Held70}, and which we will refer to as the {\bf subtour LP}).   For $S\subset V$, denote the set of edges with exactly one endpoint in $S$ by $\delta(S):=\{e=\{i, j\}: |\{i, j\}\cap S|=1\}$ and let $\delta(v):=\delta(\{v\}).$  For $F\subset E,$ let $x(F)$ denote the sum of $x$ over those edges in $F$: $x(F)=\sum_{e\in F} x_e.$  The subtour LP is:
\begin{equation}\label{eq:SLP}\begin{array}{l l l}
\min & \sum_{e\in E} c_e x_e & \\
\text{subject to} & x(\delta(v)) = 2, & v=1, \ldots, n \\
& x(\delta(S)) \geq 2, & S\subset V: S\neq \emptyset, S\neq V \\
&0\leq x_e \leq 1, & e\in E. \end{array}
\end{equation}
The subtour LP is a {\bf relaxation} of the TSP because 1) every Hamiltonian cycle has a corresponding feasible solution to the subtour LP, and 2) the value of the subtour LP for such a feasible solution equals the cost of the corresponding Hamiltonian cycle.   

Significant recent research has gone into developing relaxations instead based on semidefinite programs (SDPs).  See, e.g., Cvetkovi{\'c}, {\v{C}}angalovi{\'c}, and Kova{\v{c}}evi{\'c}-Vuj{\v{c}}i{\'c} \cite{Cvet99} (who both introduce an SDP relaxation based on algebraic connectivity), de Klerk, Pasechnik, and Sotirov \cite{Klerk08} (who introduce an SDP relaxation based on the theory of association schemes and give the matrix-tree theorem-based SDP constraint), and  de Klerk and Sotirov \cite{Klerk12b} (who use symmetry reduction to strengthen the SDP of  de Klerk et al.\ \cite{Klerk08}).  Various results have characterized the performance of these SDPs (Goemans and Rendl \cite{Goe00}, Gutekunst and Williamson \cite{Gut17}, and Gutekunst and Williamson \cite{Gut19}).

The TSP SDP relaxations generally have some symmetric matrix variable $X$ that can be interpreted as a weighted adjacency matrix. There are typically constraints enforcing that $X_{ii}=0,$ and since $X$ is symmetric, $X_{ij}=X_{ji}$ can be thought of as the weight $x_e$ on edge $e=\{i, j\}$.   In a feasible solution to the SDP relaxation taking on integral values, constraints force $X$ to be the weighted adjacency matrix of a Hamiltonian cycle.  There are generally constraints that directly enforce that $X$ is 2-regular in a weighted sense: every row of $X$ sums to 2, in analogy to the subtour LP constraints that $x(\delta(v))=2$.  If $G$ is the graph with edge weights $x_e,$ the constraints imply that the corresponding Laplacian matrix to $X$ is $L(X)=2I-X$.  Throughout we treat $X$ as the weighted adjacency matrix of a complete graph $G=K_n$ where edges can have weight zero.\footnote{Any spanning tree $T$ containing an edge of weight zero has $ \prod_{e\in T} x_e =0$ and doesn't contribute to $\sum_{T\in \mathcal{T}_{G}} \prod_{e\in T} x_e \geq n$.  We can let $G=K_n$ without loss of generality, as any other graph can be extended to the complete graph by placing a weight of zero on all missing edges; the weighted adjacency matrix, Laplacian, and aggregate spanning tree weight $ \sum_{T\in \mathcal{T}_{G}} \prod_{e\in T} x_e \geq n$ will not change.}

  Let $A_{-i}$ denote the matrix obtained by deleting the $i$th row and column of $A$. Note that $2I-X$ is positive semidefinite, so $(2I-X)_{-i}$ is positive semidefinite for all $i$.   De Klerk et al.\ \cite{Klerk08}'s observation that a Hamiltonian cycle has $n$ spanning trees, together with the aforementioned matrix-tree theorem, allows them to introduce the SDP constraint  \begin{equation} \label{eq:MTT} \text{det}\left(\left(2I-X\right)_{-1}\right) \geq n.\end{equation}  Since the set $\{Z\succeq 0: \text{det}(Z) \geq n\}$ can be expressed as a linear matrix inequality (see, e.g., Section 3.2 of Nemirovskii \cite{Nem05}),  the constraint in  Equation (\ref{eq:MTT}) can be written as a linear matrix inequality for use in TSP SDP relaxations.  De Klerk  et al.\ \cite{Klerk08} note  that this constraint strengthens a semidefinite programming relaxation of the TSP from Cvetkovi{\'c}, {\v{C}}angalovi{\'c}, and Kova{\v{c}}evi{\'c}-Vuj{\v{c}}i{\'c} \cite{Cvet99}.
 We refer to Equation (\ref{eq:MTT}) as the ``matrix-tree theorem constraint.''

\section{The Matrix-Tree Theorem Constraint}\label{sec:proof}

To prove our main result, we will use the following result from Ok and Thomassen \cite{Ok17} which relates edge-connectivity to spanning trees. An unweighted, undirected, loopless multigraph $G=(V, E)$ is {\bf $k$-edge-connected} if $G$ is still connected after the removal of any $k-1$ edges.

\begin{thm}[Theorem 1 in Ok and Thomassen \cite{Ok17}]\label{lem:OK}
Let $G$ be an weighted, loopless, undirected multigraph that is $k$-edge-connected.  Then $G$ has at least $n\left(\f{k}{2}\right)^{n-1}$ spanning trees.
\end{thm}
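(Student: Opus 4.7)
The plan is to prove Theorem \ref{lem:OK} by induction on $n$, using edge-splitting at a minimum-degree vertex as the main reduction tool. For the base case $n=2$, a $k$-edge-connected loopless multigraph on two vertices consists of at least $k$ parallel edges, each of which is a spanning tree, giving the number of spanning trees at least $k = 2(k/2)^{n-1}$.

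For the inductive step, I would pick a vertex $v$ of minimum degree $d \geq k$ and invoke Mader's edge-splitting theorem: at any vertex of a $k$-edge-connected multigraph, one can pair up the incident edges into ``splittable pairs'' $\{uv, vw\}$ such that replacing each pair by a single edge $uw$ preserves $k$-edge-connectivity on the remaining vertex set. Performing $\lfloor d/2 \rfloor$ such splittings at $v$ and then deleting $v$ (together with any one leftover edge, if $d$ is odd) yields a loopless $k$-edge-connected multigraph $G'$ on $n-1$ vertices, so by induction $G'$ has at least $(n-1)(k/2)^{n-2}$ spanning trees. Spanning trees of $G$ are then reconstructed by lifting: each split edge used in a spanning tree $T'$ of $G'$ can be replaced by either of the two edges in its defining pair, and $v$ is reattached via an unused half of some pair. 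Tracking these choices is designed to multiply the bound by $k/2$ per vertex removed and to recover the leading factor $n$, which is motivated by the tight example of a cycle $C_n$ with each edge replaced by $k/2$ parallel edges (which has exactly $n(k/2)^{n-1}$ spanning trees: choose which of the $n$ bundles to skip, then one edge from each remaining bundle).

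The main obstacle is precisely this last counting argument: the lift from spanning trees of $G'$ to spanning trees of $G$ is not a clean bijection, and one must show that each spanning tree of $G$ is produced at most a bounded number of times while each spanning tree of $G'$ produces at least roughly $k/2$ distinct lifts. The parity of $d$ (when $k$ is odd, one edge at $v$ is unpaired), together with the need to obtain an $n$ rather than $n-1$ in the leading coefficient, will likely force either a strengthened inductive hypothesis tracking additional edge-usage statistics, or a separate treatment of the near-tight case where $v$ has exactly $k$ incident edges and the graph locally resembles a fat cycle. This delicate combinatorial bookkeeping is the technical heart of the Ok--Thomassen theorem and the step where almost all of the work resides.
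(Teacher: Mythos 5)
The paper never proves this statement: Theorem \ref{lem:OK} is imported wholesale as Theorem 1 of Ok and Thomassen \cite{Ok17} and used as a black box in the proof of Proposition \ref{thm:MT2}. So your proposal has to be judged as a standalone proof of the Ok--Thomassen theorem, and as such it has a genuine gap---one you name yourself: the lifting/counting argument, which is where the entire theorem lives, is never carried out. Everything you do establish (the base case $n=2$, the tight ``fat cycle'' example, the choice of a minimum-degree vertex, the appeal to Mader splitting) is scaffolding around a hole.

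Moreover, the gap is worse than ``delicate bookkeeping.'' First, the arithmetic cannot close as set up: induction gives at least $(n-1)(k/2)^{n-2}$ spanning trees of $G'$, so even a clean gain of $k/2$ per tree yields only $(n-1)(k/2)^{n-1}$, short of the target $n(k/2)^{n-1}$ by a factor of $n/(n-1)$; the induction hypothesis as stated is therefore too weak, and strengthening it is the crux of the theorem, not a detail. Second, the lift itself is broken as described: if a spanning tree $T'$ of $G'$ contains $t\geq 2$ split edges, at most one of them can be expanded into the two edges $uv, vw$ (expanding two creates a cycle through $v$), and the remaining split edges are not edges of $G$ at all, so the proposed lift does not even produce a subgraph of $G$; likewise, replacing a split edge $uw$ by just one of its halves, say $uv$, leaves the component of $w$ disconnected unless $v$ is simultaneously joined to it, which is not the rule you state. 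Finally, several hypotheses go unchecked: splitting a pair $\{uv,vw\}$ with $u=w$ creates a loop, Mader's theorem carries degree and cut-edge conditions that must be verified at $v$, and when $\deg(v)$ is odd you delete the leftover edge and must argue $k$-edge-connectivity of $G'$ survives (this can be rescued via preservation of local edge-connectivity between vertices other than $v$, but you do not do so). As it stands, the proposal is a plan whose technical heart is missing, and it cannot be accepted as a proof of the cited theorem.
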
 
We first use it to prove the following:
\begin{prop}\label{thm:MT2}
Let $G=(V, E)$ be a weighted simple graph with rational edge weights given by $x\in\R^{E}.$   If $x$ is an extreme point of the subtour LP (\ref{eq:SLP}), then $$\sum_{T\in\mathcal{T_G}} \prod_{e\in T} x_e \geq n.$$  
\end{prop}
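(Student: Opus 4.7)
My plan is to reduce the weighted setting to an unweighted multigraph, apply the Ok--Thomassen bound there, and then translate the spanning-tree count back to the weighted graph. Since $x$ is rational, there is a common denominator $q$ such that $p_e := q x_e$ is a nonnegative integer for every $e \in E$. I build a loopless multigraph $G' = (V, E')$ on the same vertex set by taking exactly $p_e$ parallel copies of each edge $e$; edges with $x_e = 0$ contribute no copies.

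The subtour LP cut constraints $x(\delta(S)) \geq 2$ for all $\emptyset \subsetneq S \subsetneq V$ give, after multiplication by $q$,
\[
\sum_{e \in \delta(S)} p_e \geq 2q,
\]
so $G'$ is $2q$-edge-connected in the usual unweighted sense. Applying Theorem \ref{lem:OK} with $k = 2q$ then yields
\[
|\mathcal{T}_{G'}| \geq n \left(\tfrac{2q}{2}\right)^{n-1} = n\, q^{n-1}.
\]

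Next I relate $|\mathcal{T}_{G'}|$ to the weighted count in $G$. Each spanning tree of $G'$ projects to a spanning tree of $G$ by forgetting which parallel copy is used, and conversely each $T \in \mathcal{T}_G$ lifts to exactly $\prod_{e \in T} p_e$ spanning trees of $G'$ (independently choosing one of the $p_e$ copies for each of the $n-1$ edges of $T$). Since $|T| = n-1$, this gives
\[
|\mathcal{T}_{G'}| = \sum_{T \in \mathcal{T}_G} \prod_{e \in T} p_e = q^{n-1} \sum_{T \in \mathcal{T}_G} \prod_{e \in T} x_e.
\]
Combining with the Ok--Thomassen bound and dividing through by $q^{n-1}$ produces the desired inequality $\sum_{T \in \mathcal{T}_G} \prod_{e \in T} x_e \geq n$.

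The main obstacle is simply the rationality step: Ok--Thomassen applies to unweighted multigraphs only, and without a finite common denominator $q$ the multigraph construction would not make sense. This is precisely where the extreme-point hypothesis enters, since extreme points of an LP with integer coefficients are rational; no further property of extreme points is needed, and the rest of the argument is a direct bookkeeping between weighted and unweighted spanning-tree counts.
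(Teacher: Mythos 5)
Your proof is correct and follows essentially the same route as the paper: use rationality of the extreme point to scale the weights to integers via a common denominator, realize the scaled graph as an unweighted loopless multigraph that is $2q$-edge-connected, apply the Ok--Thomassen bound, and translate the spanning-tree count back to the weighted graph. The only difference is presentational: you justify the identity $|\mathcal{T}_{G'}| = q^{n-1}\sum_{T\in\mathcal{T}_G}\prod_{e\in T}x_e$ by a direct lifting/counting argument over parallel copies, whereas the paper reaches the same conclusion via an iterated edge-splitting lemma (Lemma \ref{lm:rat}) together with the matrix-tree theorem and the scaling behavior of determinants.
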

\noindent Theorem \ref{thm:Main1} will then follow as an immediate consequence.  

To prove Proposition \ref{thm:MT2}, we start with a symmetric, simple weighted graph $G=(V, E)$ with edge weights given by $x\in\R^E.$  Because $x$ is rational, we will be able to scale $x$ so that $Rx\in \Z^E$.  Then we let $G'=(V, E')$ be an undirected, loopless, unweighted multigraph with $Rx_e$ copies of edge $e$.  Moreover, if $x(\delta(S))\geq 2$ then $Rx(\delta(S))\geq 2R$ so that $G'$ will be $2R$-edge-connected.  We can then appeal to Theorem \ref{lem:OK}, find a large number of spanning trees, and find corresponding spanning trees in $G$.  

We first verify that the aggregate weight of spanning trees in $G'$ (as an unweighted multigraph with $Rx_e$ copies of edge $e$) matches the aggregate weight of spanning trees in $G$ (as a weighted simple graph where edge $e$ has weight $Rx_e$).  To do so, we  apply the following lemma iteratively.

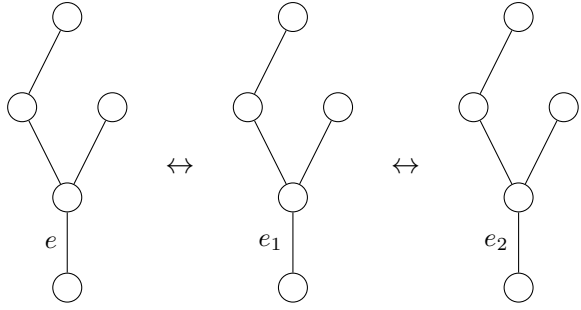
\begin{figure}[t]
\centering

\begin{tikzpicture}[scale=0.6]
\tikzset{vertex/.style = {shape=circle,draw,minimum size=1em}}
\tikzset{edge/.style = {->,> = latex'}}
\tikzstyle{decision} = [diamond, draw, text badly centered, inner sep=3pt]
\tikzstyle{sq} = [regular polygon,regular polygon sides=4, draw, text badly centered, inner sep=3pt]
\node[vertex] (a) at  (-9, 0) {};
\node[vertex] (b) at  (-7, 0) {};
\node[vertex] (c) at  (-8, 2) {};
\node[vertex] (d) at  (-8, -2) {};
\node[vertex] (e) at  (-8, -4) {};

\node[vertex] (a1) at  (-4, 0) {};
\node[vertex] (b1) at  (-2, 0) {};
\node[vertex] (c1) at  (-3, 2) {};
\node[vertex] (d1) at  (-3, -2) {};
\node[vertex] (e1) at  (-3, -4) {};

\node[vertex] (a2) at  (1, 0) {};
\node[vertex] (b2) at  (3, 0) {};
\node[vertex] (c2) at  (2, 2) {};
\node[vertex] (d2) at  (2, -2) {};
\node[vertex] (e2) at  (2, -4) {};

\draw  (-5.5, -1) node[below] {$\leftrightarrow$};
\draw  (-.5, -1) node[below] {$\leftrightarrow$};
 \path[every node/.style={font=\sffamily\small}]
        (d) edge node [left] {$e$} (e);
        
         \path[every node/.style={font=\sffamily\small}]
        (d1) edge node [left] {$e_1$} (e1);
        
         \path[every node/.style={font=\sffamily\small}]
        (d2) edge node [left] {$e_2$} (e2);

\draw (c) -- (a);
\draw (d) -- (a);
\draw (d) -- (b);

\draw (c1) -- (a1);
\draw (d1) -- (a1);
\draw (d1) -- (b1);

\draw (c2) -- (a2);
\draw (d2) -- (a2);
\draw (d2) -- (b2);

\end{tikzpicture}\caption{A sample tree instantiation in $\mathcal{T}_G^e, \mathcal{T}_{G'}^1,$ and $\mathcal{T}_{G'}^2$}\label{fig:bij}
\end{figure}

\begin{lm}\label{lm:rat}
Let $G$ be a weighted loopless multigraph.  Let $e=\{u, v\} \in G$ and let $G'$ be obtained from $G$ by splitting $e$ into two copies $e_1=e_2=\{u, v\}$ and assigning nonnegative weights $x'$ to the edges in $G'$ so that $x_e=x'_{e_1}+x'_{e_2}$ (and $x_f=x'_f$ for all other edges $f$ in $G$). Then 
$$\sum_{T\in\mathcal{T}_G} \prod_{f\in T} x_f=\sum_{T\in\mathcal{T}_{G'}} \prod_{f\in T} x'_f.$$
\end{lm}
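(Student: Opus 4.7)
The plan is to prove the identity by partitioning the spanning trees on each side according to which copies of the split edge they use, and establishing a simple correspondence that matches weighted contributions.

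First, I would observe that $e_1$ and $e_2$ are parallel edges with the same endpoints $u, v$ in $G'$, so together they form a $2$-cycle. Hence no spanning tree of $G'$ can contain both $e_1$ and $e_2$. This lets me partition
$$\mathcal{T}_{G'} = \mathcal{T}_{G'}^{1} \sqcup \mathcal{T}_{G'}^{2} \sqcup \mathcal{T}_{G'}^{0},$$
where $\mathcal{T}_{G'}^{i}$ denotes the spanning trees of $G'$ containing $e_i$ but not $e_{3-i}$ (for $i=1,2$), and $\mathcal{T}_{G'}^{0}$ denotes those containing neither. Similarly, I would partition $\mathcal{T}_G = \mathcal{T}_G^{e} \sqcup \mathcal{T}_G^{\bar{e}}$ according to whether $e$ belongs to the tree.

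Next, I would set up the natural bijections between these pieces. For each $i\in\{1,2\}$, the map that replaces $e$ by $e_i$ is a bijection $\phi_i: \mathcal{T}_G^{e} \to \mathcal{T}_{G'}^{i}$ (its inverse merges $e_i$ back into $e$), since $G$ and $G'$ are identical on all other edges. The identity map is a bijection $\mathcal{T}_G^{\bar{e}} \to \mathcal{T}_{G'}^{0}$, because spanning trees avoiding the edge $e$ in $G$ are exactly spanning trees avoiding both $e_1$ and $e_2$ in $G'$. Under $\phi_i$, the edge weights are preserved on every edge except $e$, which is replaced by $e_i$ carrying weight $x'_{e_i}$.

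Finally, I would combine these bijections with the hypothesis $x'_{e_1}+x'_{e_2}=x_e$ (and $x'_f=x_f$ for $f\neq e,e_1,e_2$) to compute
$$\sum_{T\in \mathcal{T}_{G'}^{1}}\prod_{f\in T} x'_f + \sum_{T\in \mathcal{T}_{G'}^{2}}\prod_{f\in T} x'_f = \sum_{T\in \mathcal{T}_G^{e}}\bigl(x'_{e_1}+x'_{e_2}\bigr)\prod_{\substack{f\in T\\ f\neq e}} x_f = \sum_{T\in \mathcal{T}_G^{e}}\prod_{f\in T} x_f,$$
while $\sum_{T\in \mathcal{T}_{G'}^{0}}\prod_{f\in T} x'_f = \sum_{T\in \mathcal{T}_G^{\bar{e}}}\prod_{f\in T} x_f$ directly from the identity bijection. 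Summing the two equalities yields the claim.

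There is no substantial obstacle in this proof; the only subtlety is the remark that $e_1$ and $e_2$ cannot both lie in a spanning tree, which is exactly what makes the three-way partition of $\mathcal{T}_{G'}$ match the two-way partition of $\mathcal{T}_G$ after combining $\mathcal{T}_{G'}^{1}$ and $\mathcal{T}_{G'}^{2}$ via the $x'_{e_1}+x'_{e_2}=x_e$ factorization.
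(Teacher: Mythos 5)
Your proof is correct and follows essentially the same route as the paper: the same three-way partition of $\mathcal{T}_{G'}$ (using that no spanning tree contains both parallel edges $e_1,e_2$), the same edge-swap bijections onto $\mathcal{T}_G^e$, and the same factorization via $x'_{e_1}+x'_{e_2}=x_e$. No issues.
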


In the proof, we use $\sqcup$ to denote a partition: $S=A\sqcup B$ means $S=A\cup B$ and $A\cap B=\emptyset.$  We also use $\backslash$ for set-minus, so that $S\backslash A=\{x\in S: x\notin A\}.$

\begin{proof}
This result follows by partitioning $\mathcal{T}_{G'}.$  No $T\in \mathcal{T}_{G'}$ can contain both $e_1$ and $e_2$ so we write $$\mathcal{T}_{G'}=\mathcal{T}^0_{G'} \sqcup \mathcal{T}^1_{G'}\sqcup\mathcal{T}^2_{G'}$$ where $\mathcal{T}^i_{G'}$ consists of those spanning trees including edge $i$ for $i=1, 2$ and  $\mathcal{T}^0_{G'}$ consists of those trees  including neither $e_1$ nor $e_2.$
We analogously partition $$\mathcal{T}_G=\mathcal{T}^0_{G} \sqcup \mathcal{T}^e_{G},$$ where $\mathcal{T}^0_{G}$ consists of spanning trees not using $e$ and $\mathcal{T}^e_{G}$ consists of spanning trees using $e.$  

The trees in  $\mathcal{T}^1_{G'},$ $\mathcal{T}^2_{G'},$ and $\mathcal{T}^e_{G}$ all use exactly one edge $\{u, v\}$ (and other than using exactly one of $e_1, e_2,$ or $e$ as $\{u, v\}$, use the exact same other edges).  Hence if $T\in \mathcal{T}^1_{G'}$, then  $\left(T\backslash e_1\right)\cup e_2\in \mathcal{T}^2_{G'}$ and $\left(T\backslash e_1\right)\cup e \in \mathcal{T}^e_{G}.$  This process gives a one-to-one correspondence between trees in $\mathcal{T}^1_{G'},$ $\mathcal{T}^2_{G'},$ and $\mathcal{T}^e_{G};$ see Figure \ref{fig:bij}.  Analogously, $\mathcal{T}^0_{G'}=\mathcal{T}^0_G.$  Hence:
\begin{align*}
\sum_{T\in\mathcal{T}_{G'}} \prod_{f\in T} x'_f 
&=\sum_{T\in\mathcal{T}^0_{G'}} \prod_{f\in T} x'_f+\sum_{T\in\mathcal{T}^1_{G'}} \prod_{f\in T} x'_f+\sum_{T\in\mathcal{T}^2_{G'}} \prod_{f\in T} x'_f \\
&=\sum_{T\in\mathcal{T}^0_{G}} \prod_{f\in T} x_f+(x'_{e_1}+x'_{e_2})\sum_{T\in\mathcal{T}^1_{G'}} \prod_{f\in T, f\neq e_1} x'_f \\
&=\sum_{T\in\mathcal{T}^0_{G}} \prod_{f\in T} x_f+x_e\sum_{T\in\mathcal{T}^1_{G}} \prod_{f\in T, f\neq e} x_f \\
&=\sum_{T\in\mathcal{T}^0_{G}} \prod_{f\in T} x_f+\sum_{T\in\mathcal{T}^1_{G}} \prod_{f\in T} x_f \\
&=\sum_{T\in\mathcal{T}_G} \prod_{f\in T} x_f.
\end{align*}
\hfill
\end{proof}

We now prove our main theorem in the special case of subtour LP extreme points.

\begin{figure}[t]
\centering

\begin{tikzpicture}[scale=0.6]
\tikzset{vertex/.style = {shape=circle,draw,minimum size=1em}}
\tikzset{edge/.style = {->,> = latex'}}
\tikzstyle{decision} = [diamond, draw, text badly centered, inner sep=3pt]
\tikzstyle{sq} = [regular polygon,regular polygon sides=4, draw, text badly centered, inner sep=3pt]
\node[vertex] (a) at  (1, 0) {};
\node[vertex] (b) at  (0, 1) {};
\node[vertex] (c) at  (0, -1) {};
\node[vertex] (d) at  (3, 0) {};
\node[vertex] (e) at  (4, 1) {};
\node[vertex] (f) at  (4, -1) {};
\draw  (2, -2) node[below] {$G$};

\node[vertex] (a1) at  (8, 0) {};
\node[vertex] (b1) at  (7, 1) {};
\node[vertex] (c1) at  (7, -1) {};
\node[vertex] (d1) at  (10, 0) {};
\node[vertex] (e1) at  (11, 1) {};
\node[vertex] (f1) at  (11, -1) {};
\draw  (9, -2) node[below] {$G'$};

\draw[dotted,line width=1.2pt] (a) -- (b);
\draw[dotted,line width=1.2pt] (b) -- (c);
\draw[dotted,line width=1.2pt] (c) -- (a);

\draw[dotted,line width=1.2pt] (d) -- (e);
\draw[dotted,line width=1.2pt] (e) -- (f);
\draw[dotted,line width=1.2pt] (f) -- (d);

\draw (a) -- (d);
\draw (b) -- (e);
\draw (c) -- (f);

\draw (a1) -- (b1);
\draw (b1) -- (c1);
\draw (c1) -- (a1);
\draw(d1) -- (e1);
\draw (e1) -- (f1);
\draw (f1) -- (d1);

\path (a1) edge [bend left] (d1);
\path (d1) edge [bend left] (a1);

\path (b1) edge [bend left=15] (e1);
\path (e1) edge [bend left=15] (b1);

\path (c1) edge [bend left=15] (f1);
\path (f1) edge [bend left=15] (c1);

\end{tikzpicture}\caption{The left shows a simple, weighted graph $G$ where dashed edges have weight $1/2$ and full edges have weigh $1.$  In this case $R=2$ and the right shows the corresponding unweighted multigraph $G'.$}\label{fig:R}
\end{figure}
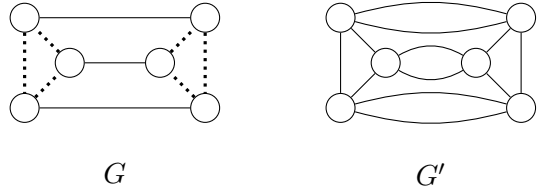

\begin{proof}[Proof (of Proposition \ref{thm:MT2})]
Let $x$ be a feasible extreme point of the subtour LP.  Then  $x(\delta(S))\geq 2$ for all $S$ with $1\leq |S| \leq |V|-1$ and, moreover, it is well-known that $x$ is rational.\footnote{Extreme points occur where a certain number of constraints hold with equality.  Cramer's rule, e.g., shows that if the constraints of a linear program have rational coefficients, then every extreme point is rational.  This is the case for the subtour LP.}

We first convert $G$ into a loopless, unweighted multigraph.  To do so, suppose that $x_{e_i}=\f{s_i}{r_i}$ in lowest terms.  Let $$R=\mathcal{LCM}(r_1, ..., r_m).$$  Let $G'$ denote the graph $G$ with weights $x'_e=Rx_e$ for all $e\in G$ and let $X'=RX$.  Then we make two observations: $x'_e\in\Z$ for all $e$, and properties of deteriminants imply \begin{equation}\label{eq:det1} \text{det}\left(\left(L(X')_{-1}\right)\right) = R^{n-1}\text{det}\left(\left(L(X)_{-1}\right)\right).\end{equation}  To compute $\text{det}\left(\left(L(X')_{-1}\right)\right)$ we appeal to the matrix-tree theorem; by Lemma \ref{lm:rat} it is equivalent (in terms of the aggregate weight of spanning trees) to view $G'$ as a loopless unweighted multigraph where there are $x'_e$ copies of edge $e$ (so that there are $s_i \f{R}{r_i}\in \Z$ copies of edge $e_i$, each of weight 1).   See Figure \ref{fig:R}.

Note that $x(\delta(S))\geq 2$ implies that $x'(\delta(S))\geq 2R.$  Thus $G'$ is $2R$-edge-connected and by Theorem \ref{lem:OK}, $G'$ has at least $nR^{n-1}$ spanning trees; since every edge of $G'$ has weight 1, the matrix-tree theorem states
$$\text{det}\left(\left(L(X')_{-1}\right)\right)=\sum_{T\in\mathcal{T}_{G'}} \prod_{f\in T} x'_f \geq nR^{n-1}.$$ Combining with Equation \ref{eq:det1} we have:

\begin{align*}
 R^{n-1}\text{det}\left(\left(L(X)_{-1}\right)\right)
 &=  \text{det}\left(\left(L(X')_{-1}\right)\right) \\
 & \geq nR^{n-1}.
\end{align*}
That is, $$\text{det}\left(\left(L(X)_{-1}\right)\right)\geq n$$ and the matrix-tree theorem implies $$\sum_{T\in\mathcal{T_G}} \prod_{e\in T} x_e \geq n.$$ \hfill
\end{proof}

We can now show that the matrix-tree theorem constraint (\ref{eq:MTT}) holds for any feasible point of the subtour LP.  We restate our main theorem in slightly more detail.

\begin{thm*}[Theorem \ref{thm:Main1}, restated]
Let $x\in \R^E$ be a feasible solution to the subtour LP (\ref{eq:SLP}) and let $G$ be the complete graph.  Let $X$ be the symmetric matrix where $X_{ij}=X_{ji}=x_{\{i, j\}}$ and $X_{ii}=0$ for all $i$.  Then $X$ satisfies the matrix-tree theorem constraint:
 $$\text{det}\left(\left(2I-X\right)_{-1}\right) \geq n.$$   Equivalently, $$\sum_{T\in \mathcal{T}_{G}} \prod_{e\in T} x_e \geq n.$$
\end{thm*}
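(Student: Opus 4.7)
The plan is to lift Proposition \ref{thm:MT2} from extreme points of the subtour LP to arbitrary feasible points via a concavity argument. The subtour LP polytope is bounded (since $0\le x_e\le 1$), so Minkowski--Weyl lets me write any feasible $x$ as a convex combination $x=\sum_i\lambda_i x^i$ of extreme points, and correspondingly $X=\sum_i\lambda_i X^i$. The content of the theorem is then to show that passing to such a combination does not drop the aggregate spanning tree weight below $n$.

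The obvious obstruction is non-linearity: $\sum_{T\in\mathcal{T}_G}\prod_{e\in T}x_e$ is a homogeneous polynomial of degree $n-1$ in $x$, so the inequality at extreme points does not propagate by linearity alone. The fix is to use the determinantal form from the matrix-tree theorem and invoke Minkowski's determinant inequality, which states that on the cone of $k\times k$ positive semidefinite matrices, $A\mapsto(\det A)^{1/k}$ is concave. To set this up with $k=n-1$, I first note that the degree constraints $x(\delta(v))=2$ imply that $2I-X$ coincides with the Laplacian $L(X)$ of the weighted graph on $K_n$ with nonnegative edge weights $x$; since $L(X)$ is PSD for any such weights, so is each principal submatrix $(2I-X)_{-1}$, and the same holds for every $(2I-X^i)_{-1}$.

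Applying concavity to the decomposition $(2I-X)_{-1}=\sum_i\lambda_i(2I-X^i)_{-1}$ of PSD matrices and combining with Proposition \ref{thm:MT2} at each extreme point $x^i$ yields
\[
\det\bigl((2I-X)_{-1}\bigr)^{1/(n-1)}\;\ge\;\sum_i\lambda_i\det\bigl((2I-X^i)_{-1}\bigr)^{1/(n-1)}\;\ge\;\sum_i\lambda_i\, n^{1/(n-1)}\;=\;n^{1/(n-1)}.
\]
Taking $(n-1)$th powers and re-invoking the matrix-tree theorem to rewrite $\det((2I-X)_{-1})$ as $\sum_{T\in\mathcal{T}_G}\prod_{e\in T}x_e$ finishes the proof. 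I do not anticipate a real obstacle: the Laplacian identification and the reduction to extreme points are routine, and Minkowski's determinant inequality is classical and can simply be cited. The only point worth pausing on is checking that the inequality is being used in the right direction---i.e., that $(\det\,\cdot\,)^{1/(n-1)}$ is concave rather than convex on the PSD cone---but this is precisely what Minkowski gives.
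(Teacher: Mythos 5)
Your proof is correct and follows essentially the same route as the paper: write $x$ as a convex combination of extreme points, apply Proposition~\ref{thm:MT2} at each, and propagate the determinant bound via concavity on the PSD cone. The only difference is the concavity inequality used --- you invoke Minkowski's inequality (concavity of $(\det(\cdot))^{1/(n-1)}$ on PSD matrices), which handles arbitrary convex combinations in one step, whereas the paper uses log-concavity of the determinant on positive \emph{definite} matrices, which forces it to first rule out a zero eigenvalue of $(2I-Y)_{-1}$ and to spell out only the two-point case, leaving general combinations as an exercise.
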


\begin{proof} The subtour LP is bounded, so that every feasible $x$ for the subtour LP can be written as a convex combination of extreme points to the subtour LP.  For any extreme point of the subtour LP $y$, let $Y$ be the matrix where $Y_{ij}=Y_{ji}=y_{\{i, j\}}$ and $Y_{ii}=0.$  Feasibility of the subtour LP means $y(\delta(i))=2$ for all $i\in V$, so the associated Laplacian is $2I-Y$.  By Proposition \ref{thm:MT2} and the matrix-tree theorem, $\text{det}\left(\left(2I-Y\right)_{-1}\right) \geq n.$  

We now show that any convex combination of two extreme points of the subtour LP also satisfies the matrix-tree theorem constraint; extending to general convex combinations is left as an exercise.  Note that the determinant is well-known to be log concave on symmetric positive definite matrixes (see, e.g., section 3.1 of Boyd and Vandenberghe \cite{Boyd04})  so that $\text{det}(tA+(1-t)B)\geq \text{det}(A)^t \text{det}(B)^{1-t}$ for $0\leq t\leq 1$ if $A, B\succ 0.$  

Consider two extreme points of the subtour LP, with weighted adjacency matrices $A$ and $B$.  Denote their graph Laplacians as $L(A)=2I-A$ and $L(B)=2I-B$ respectively.  For a graph with weighted adjacency matrix $X$, all principal subminors of $L(X)$ are nonnegative so that all principal subminors of $L(X)_{-1}$ are as well: these are just the principal subminors of $L(X)$ that include row/column 1 being removed.  This implies that $L(X)_{-1}\succeq 0$.  By Proposition  \ref{thm:MT2}, $\text{det}\left(\left(L(A)_{-1}\right)\right), \text{det}\left(\left(L(B)_{-1}\right)\right)  \geq n$ so that zero cannot be an eigenvalue of $\left(L(A)_{-1}\right)$ or $\left(L(N)_{-1}\right)$ and so both are positive definite.  By log-concavity,
\begin{align*}
\text{det}&(t\left(L(A)_{-1}\right)+(1-t)\left(L(B)_{-1}\right)) \hspace{5mm} 
\\ &\geq 
\left(\text{det}\left(L(A)_{-1}\right)\right)^t \left(\text{det}\left(L(B)_{-1}\right)\right)^{1-t} \\
&\geq n^t n^{1-t} \\
&= n.
\end{align*}
Hence, $tA+(1-t)B$ satisfies the matrix-tree-theorem constraint (\ref{eq:MTT}). 
\hfill
\end{proof}

\begin{rem}
Note that the proof of Theorem \ref{thm:Main1} for any $x$ such that $x(\delta(S))\geq 2$ for each $S\subset V$ with $1\leq |S|\leq |V|-1.$  Hence, any $x$ with $x(\delta(S))\geq 2$ for all such $S$ and corresponding weighted adjacency matrix $X$ satisfies $$\text{det}\left(\left(L(X)_{-1}\right)\right) =\sum_{T\in \mathcal{T}_{G}} \prod_{e\in T} x_e \geq n.$$  However, it need not be the case that that rows of sum to $X$, so possibly $L(X)\neq 2I-X.$
\end{rem}

\section{Conclusion}

Theorem \ref{thm:Main1} has several implications.  Goemans and Rendl \cite{Goe00} show that the subtour LP is stronger than a TSP SDP relaxation of Cvetkovi{\'c} et al.\ \cite{Cvet99} in the following sense: Any feasible solution for the subtour LP corresponds to a feasible solution of the same cost for the SDP.  Hence, on any given instance, the optimal value of the subtour LP is at least as close to the cost of a TSP solution as the optimal value of the SDP.  Theorem \ref{thm:Main1} gives a comparable weakness result for the matrix-tree theorem constraint (\ref{eq:MTT}).  Moreover, it implies that Goemans and Rendl \cite{Goe00}'s result extends to the case where the matrix-tree theorem constraint (\ref{eq:MTT}) is added to the SDP of Cvetkovi{\'c} et al.\ \cite{Cvet99}.  More generally, our results show that matrix semidefinite inequalities used to impose the matrix-tree theorem are implied by a set of linear inequalities.

\section*{Acknowledgments}
This work was supported by NSF grant CCF-1908517.  This material is also based upon work supported by the National Science Foundation Graduate Research Fellowship Program under Grant No. DGE-1650441. Any opinions,
findings, and conclusions or recommendations expressed in this material are those of the
authors and do not necessarily reflect the views of the National Science Foundation.

\bibliography{bibliog} 
\bibliographystyle{abbrv}

\end{document}